\theoremstyle{plain}
\newtheorem{theorem}{Theorem}[section]
\newtheorem{lemma}[theorem]{Lemma}
\theoremstyle{definition}
\newtheorem{definition}[theorem]{Definition}
\newtheorem*{proofsketch}{Proof Sketch}
\begin{document}

\title{Computational Complexity of Game Boy Games}

\author{
 Hayder Tirmazi\thanks{City College of New York. Email:
    \texttt{hayder.research@gmail.com}}
\and
Ali Tirmazi
\and
Tien Phuoc Tran
}

\date{}

\maketitle

\begin{abstract}
    We analyze the computational complexity of several popular video games released for the Nintendo Game Boy video game console. We analyze the complexity of generalized versions of four popular Game Boy games: Donkey Kong, Wario Land, Harvest Moon GB, and Mole Mania. We provide original proofs showing that these games are \textbf{NP}-hard. Our proofs rely on Karp reductions from four of Karp's original 21 \textbf{NP}-complete problems: \textsc{Sat}, \textsc{3-Cnf-Sat}, \textsc{Hamiltonian Cycle}, and \textsc{Knapsack}. We also discuss proofs easily derived from known results demonstrating the \textbf{NP}-hardness of Lock `n' Chase and The Lion King.
\end{abstract}

\tableofcontents

\section{Introduction}\label{sec:introduction}

Formal analysis of the computational complexity of video games and puzzles has emerged as a vibrant research area~\cite{mit_hardness_book, GPC, aloupis_nintendo, viglietta}. This work analyzes multiple popular games released for the Nintendo Game Boy~\cite{gameboy}. Our results demonstrating the \textbf{NP}-hardness of Donkey Kong, Wario Land, and Mole Mania are new. The \textbf{NP}-Completeness of Simplified Harvest Moon has been discussed in prior work~\cite{Braeger2012SimplifiedHM}, but the publication's content was restricted to $140$ characters. We provide a more detailed proof demonstrating the \textbf{NP}-hardness of (not Simplified) Harvest Moon GB. We restrict our analysis to games released for the original Nintendo Game Boy\footnote{All products, company names, brand names, trademarks, sprites, and artwork are properties of their respective owners.
Sprites and artwork from games are used here under Fair Use for the educational purpose of illustrating mathematical theorems.}. We do not include games not supported by the original Game Boy and only supported by later devices such as the Game Boy Color and the Game Boy Advanced.

\subsection{Preliminaries}\label{sec:preliminaries}

We analyze generalized versions of the Game Boy games we mentioned. All the assumptions we make below are consistent with seminal work in this area by Aloupis et al.~\cite{aloupis_nintendo} and Viglietta~\cite{viglietta} (Section 1 in both papers). We only generalize the map size and leave all other game mechanics unchanged. We assume that the memory and screen size of the game are not bounded. We also assume that the game rules are as the game developers intended and ignore the existence of game-breaking glitches. Like Aloupis et al.~\cite{aloupis_nintendo}, we take the position that glitches are not an inherent part of the game; they are the result of an imperfect implementation of the game.

We introduce common definitions and notation that will be used throughout our work below. When we refer to a Game Boy game with title ``Example Game'' as \textit{Example Game}, we are referring to the name of the actual game. When we instead refer to it as \textsc{Example Game}, we are really referring to the decision problem underlying the game. We will define these decision problems for each game we analyze. We will also use the following common definitions for a Game Room and a Game Level.

\begin{definition}\label{def:game_room}
    Let a Game Room be a 2-dimensional map. Each cell on the map may contain zero or more game tiles. 
\end{definition}
\noindent By a ``cell'' we mean each entry or slot on the map. For example, a $2 \times 2$ map with tiles of size $1 \times 1$ consists of $4$ cells as it has $4$ slots where you can put tiles. Tiles may only be placed at discrete coordinates e.g. a tile may be placed at coordinate $(1, 0)$ but not at coordinate $(1.3, 0.9)$. Each game we discuss has a tile representing the avatar that can be controlled by the player using the game pad. We refer to this tile as the \textit{player} tile. Let $R$ be the set of all Game Rooms in a given Game Boy game. We model player movement between game rooms as follows.

\begin{definition}\label{def:game_transition}
    Let a Game Transition be a function $f: R \times R \mapsto R \times R$. $f$ takes two rooms, $s, d$ (source and destination) such that $s$ contains a player tile and $d$ does not. $f$ returns two rooms, $s^{\prime}, d^{\prime}$ that are identical two $s, d$ in every way other than the fact that $s^{\prime}$ does not contain a player tile and a player tile is placed in $d^{\prime}$.
\end{definition}

\begin{definition}\label{def:game_level}
    Let a Game Level be a 2-tuple $(R, T)$ where $R$ is a set of Game Rooms and $T$ is a set of Game Transitions.
\end{definition}

\subsection{Related Work}\label{sec:related_work}

There are several interesting papers on the computational complexity of video games. Demaine et al.~\cite{demaine_tetris} analyze the complexity of Tetris. Kaye~\cite{kaye_minesweeper} shows that Minesweeper is \textbf{NP}-complete. Cormode~\cite{cormode_lemmings} studies the complexity of Lemmings. Fori\v{s}ek~\cite{forisek} investigates the computational complexity of many 2-dimensional platformers including Commander Keen, Crystal Caves, Secret Agent, Bio Menace, Jill of the Jungle, Hocus Pocus, Duke Nukem, Crash Bandicoot, Jazz Jackrabbit, Lemmings, and Prince of Persia. Viglietta~\cite{viglietta} analyzes many video games released in the $1980$s and $1990$s including Pac-Man, Lode Runner, and Boulder Dash. We use one of Viglietta's meta-theorems in Section~\ref{sec:wario_land} to show that Wario Land is \textbf{NP}-hard via a Karp reduction from \textsc{Hamiltonian Cycle}. 

Aloupis et al.~\cite{aloupis_nintendo} prove that many of Nintendo's classic games are NP-hard. While Aloupis et al. analyze the complexity of games in the Donkey Kong franchise, they restrict their attention to Donkey Kong Country $1-3$. Aloupis et al.'s proof of \textbf{NP}-hardness relies on gadgets built using a game enemy called Zinger. This proof is not applicable to Donkey Kong released for the Game Boy as Zingers are not part of the enemy list of the game~\cite{donkeykongwiki}. We provide an original proof showing the \textbf{NP}-hardness of Donkey Kong for the Game Boy that relies two other game elements called switches and slide boards.

Braeger~\cite{Braeger2012SimplifiedHM} has a Tiny Transactions on Computer Science paper\footnote{Tiny Transactions on Computer Science is a venue which restricts the content of a research paper to $140$ characters} on the complexity of Harvest Moon. Braeger suggests that ``Simplified Harvest Moon'', a version of Harvest Moon restricted only to the production and sale of crops, is \textbf{NP}-Complete. The full body content of Braeger's paper is: \textit{There exists a trivial bijection from all instances of BKP\footnote{By BKP they are referring to the Bounded Knapsack Problem} to all instances of
Simplified Harvest Moon (SHM). Therefore, SHM is NP-complete.} In Section~\ref{sec:harvest_moon_gb} we provide an original proof inspired by Braeger's result showing that Harvest Moon GB is \textbf{NP}-hard via a Karp reduction from (unbounded) \textsc{Knapsack}. 

The computational complexity of games is a widely studied area, an exhaustive coverage of which is beyond the scope of this paper. We point the reader to the ``MIT Hardness Book~\cite {mit_hardness_book}'' or ``Games, Puzzles, and Computation~\cite{GPC}'' for a structured introduction to the field. There are also detailed surveys of the area including Eppstein~\cite{eppstein} and Kendall at al.~\cite{KendallSurvey}. Lastly, Gualà et al.~\cite{isnphard} have hosted ``Complexity of Games'', a compendium that indexes known hardness results for games and puzzles.

\section{Results}\label{sec:gameboy_games}

Our hardness proofs are based on reductions from four of Karp's original 21 \textbf{NP}-complete problems~\cite{Karp72}. Section~\ref{sec:donkey_kong} first defines the \textsc{Donkey Kong} decision problem and then provides a detailed proof of the \textbf{NP}-hardness of \textsc{Donkey Kong} via a Karp reduction from \textsc{3-Cnf-Sat}. Section~\ref{sec:wario_land} defined the \textsc{Wario Land} decision problem shows the \textbf{NP}-hardness of \textsc{Wario Land} via reduction from \textsc{Hamiltonian Cycle}. Section~\ref{sec:harvest_moon_gb} defines the \textsc{Harvest Moon GB} decision problem and demonstrates its \textbf{NP}-hardness via reduction from \textsc{Knapsack}. Finally, Section~\ref{sec:mole_mania} defines the \textsc{Mole Mania} decision problem and proves the \textbf{NP}-hardness of \textsc{Mole Mania} via a reduction from \textsc{Sat} by way of a reduction from \textsc{Push-1}, another well-studied~\cite{push1} \textbf{NP}-hard game.

\begin{figure}[]
    \centering
    \includegraphics[width=0.9\linewidth]{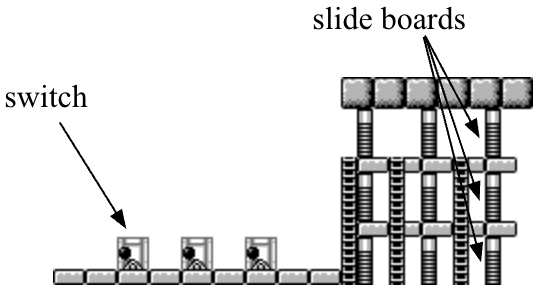}
    \caption{Any \textsc{3-CNF-Sat} instance can be converted to a level in Donkey Kong in polynomial time}
    \label{fig:donkey_kong_3_cnf_sat}
\end{figure}

\subsection{Donkey Kong}\label{sec:donkey_kong}

Donkey Kong was published for the Game Boy by Nintendo in $1994$. It is a puzzle platformer game. We note that the ``desert'' world in Donkey Kong contains switches the player can turn off and on to move one or more slide boards up or down~\cite{donkeykongmanual}. We first define the \textsc{Donkey Kong} decision problem and then show that this decision problem is \textbf{NP}-hard.

\begin{definition}\label{def:donkey_kong_decision_problem}
    An instance of the \textsc{Donkey Kong} decision problem is a Game Room (Definition~\ref{def:game_room}). For our results, we restrict our attention to the following tiles of the following types: 
    \[ 
    \{ \text{Empty}, \text{Switch}, \text{Slide-Board-Top}, \text{Slide-Board-Body}, \text{Brick-Floor}, \text{Block}, \text{Mario} \}\]
    
    \noindent We are also given two cells: $\{ \textsc{Start}, \textsc{Win} \}$. Mario is the player tile. For a given Game Room, if Mario, when placed at the \textsc{Start} cell, can reach the \textsc{Win} cell while obeying the rules of the game, then \textsc{Donkey Kong} outputs $1$. It outputs $0$ otherwise.
\end{definition}

\begin{theorem}\label{thm:donkey_kong}
\textsc{Donkey Kong} is \textbf{NP}-hard.
\end{theorem}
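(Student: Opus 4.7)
The plan is to give a polynomial-time Karp reduction from \textsc{3-Cnf-Sat} to \textsc{Donkey Kong}. Given a 3-CNF formula $\varphi$ with variables $x_1, \dots, x_n$ and clauses $C_1, \dots, C_m$, I would construct a single Game Room (per Definition~\ref{def:game_room}) whose geometry forces Mario to sequentially commit to a truth assignment and then traverse verification passages, one per clause. The \textsc{Start} cell sits at the top of a descending chain of variable gadgets, which in turn feeds into a chain of clause gadgets terminating at \textsc{Win}. The entire construction uses $O(n+m)$ switches, slide boards, brick floor, and blocks, so the reduction clearly runs in polynomial time.

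For each variable $x_i$ I would build a \emph{variable gadget} shaped like a one-way fork: Mario enters from above, drops into exactly one of two side chambers, and walks out through a common exit. The left chamber contains switches that raise every slide board associated with a positive occurrence of $x_i$ (encoding $x_i=\mathrm{true}$), and the right chamber contains switches raising the slide boards for the negative occurrences (encoding $x_i=\mathrm{false}$). Brick floors are placed so that after committing to a chamber, the height differential prevents Mario from climbing back up, which is precisely the one-way property exploited by Aloupis et al.~\cite{aloupis_nintendo} and Viglietta~\cite{viglietta}. The exit corridor forces Mario into the next variable gadget, so the only degrees of freedom in this phase are the $n$ binary choices, exactly matching assignments to $\varphi$.

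For each clause $C_j=\ell_{j,1}\vee\ell_{j,2}\vee\ell_{j,3}$ I would build a \emph{clause gadget}: a chamber Mario must cross in which three slide-board bridges correspond to the three literals. A bridge is walkable iff the corresponding slide board was raised earlier by the switch in the relevant variable gadget, so the chamber is traversable iff at least one literal of $C_j$ is satisfied. The chambers are chained, and the last one exits at \textsc{Win}. Correctness then splits into two directions: if $\varphi$ is satisfiable, Mario picks the chambers matching a satisfying assignment in the variable phase and walks through every clause gadget; conversely, any winning trajectory induces, by reading off the chamber chosen in each variable gadget, a total assignment under which every clause chamber is traversable, hence a satisfying assignment.

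The main obstacle I expect is \emph{wire crossing}. Because a single variable may appear in many clauses and the Game Room is a planar $2$-dimensional map, the logical connections from switches to slide boards may want to cross one another. I would resolve this in one of two ways: either argue that switch-to-slide-board actuation in Donkey Kong acts remotely (so ``wires'' are abstract and no physical crossing is required, only a consistent labeling), or, if a geometrically routed linkage is demanded, construct an explicit planar crossover gadget out of additional switches and slide boards whose only purpose is to let two actuation signals pass through one another without interference. Once the crossover question is settled, assembling the three gadget types into the polynomial-size Game Room referenced by Figure~\ref{fig:donkey_kong_3_cnf_sat} completes the reduction, and \textbf{NP}-hardness follows from the \textbf{NP}-completeness of \textsc{3-Cnf-Sat}~\cite{Karp72}.
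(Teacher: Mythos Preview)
Your reduction from \textsc{3-Cnf-Sat} via switches and slide boards is the same strategy the paper uses, and your argument is correct, but your decomposition is more elaborate than necessary. The paper does not build one-way fork variable gadgets at all: it simply places a row of $n$ ordinary switches at the left end of the room that Mario may toggle freely, and then the clause gadgets (vertically stacked slide boards reached by a one-way staircase) are chained to the right, with \textsc{Start} at the far left and \textsc{Win} at the far right. Because a single Donkey Kong switch already encodes both polarities (handle-left versus handle-right) and can actuate multiple slide boards simultaneously, one switch per variable suffices and there is no need for your two-chamber commitment mechanism. Your wire-crossing worry also evaporates for the same reason: the switch-to-slide-board linkage in this game is purely logical rather than geometric, so the first of your two proposed resolutions is exactly what the paper implicitly relies on, and no crossover gadget is required.
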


\begin{proof}

We demonstrate a Karp reduction from \textsc{3-Cnf-Sat} to \textsc{Donkey Kong}. For any \textsc{3-Cnf-Sat} instance with $n$ variables, construct a Game Room that begins with $n$ switches corresponding to each variable $x_{i}$ in the \textsc{3-Cnf-Sat} instance. We refer the reader to the left of Figure~\ref{fig:donkey_kong_3_cnf_sat}. For each $3$ literal disjunction $l_{1} \lor l_{2} \lor l_{3}$ in the \textsc{3-Cnf-Sat} formula, we construct vertically aligned slide boards accessible through a staircase. We refer the reader to the right of Figure~\ref{fig:donkey_kong_3_cnf_sat}. If $l_{1} = x_{i}$, the level opens the slide board at the top if and only if the $i^{\text{th}}$ switch from the left is turned on. Similarly, if $l_{1} = \lnot x_{i}$, the level opens the slide board if and only if the $i^{\text{th}}$ switch from the left is turned off. The middle and bottom slide boards are configured analogously, based on literals $l_{2}$ and $l_{3}$ respectively. Note that when we say a switch in the game is ``off'' we mean its handle is facing left, whereas a switch being ``on'' means its handle is facing right.

We select the left-most cell in the Game Room which is right above a tile of type Brick-Floor as the \textsc{Start} cell. Similarly, we pick the right-most cell in the Game Room that is above a tile of type Brick-Floor as the \textsc{Win} cell. With the switches and slide boards configured as described above, each valid solution of the \textsc{3-Cnf-Sat} instance corresponds to a valid way of scrolling through the Game Room from left to right. This reduction is valid because the Game Room can be constructed from a \textsc{3-Cnf-Sat} instance in polynomial time.
\end{proof}

\subsection{Wario Land}\label{sec:wario_land}

Wario Land: Super Mario Land 3 is a platform game published for the Game Boy by Nintendo in $1994$. Wario Land contains a door and key game mechanic that we use to derive our results. The game contains keys (also called skeleton keys) that are used to open a skull door (also called skeleton door). We refer the reader to Figure~\ref{fig:wario_land_keys}.

\begin{figure}[]
    \centering
    \includegraphics[width=\linewidth]{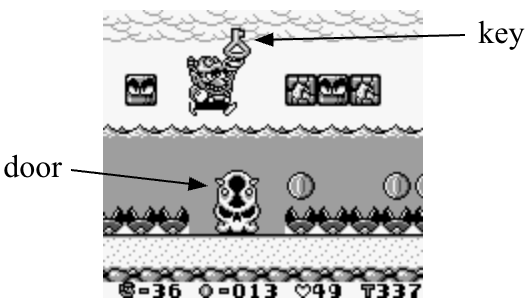}
    \caption{Wario Land contains a door and key game mechanic.}
    \label{fig:wario_land_keys}
\end{figure}

\begin{definition}\label{def:wario_skull_door}
    Let a \textit{skull door} be a game tile that can be in one of two states: $\{ \textsc{Open}, \textsc{Closed} \}$. If a skull door is in state \textsc{Open}, it enables a Game Transition (Definition~\ref{def:game_transition}) and transports Wario, the player tile, to another room in the game. 
\end{definition}

\begin{definition}\label{def:wario_key}
    Let a \textit{key} be a game tile that Wario can pick up in Wario Land. If Wario comes into contact with a skull door (Definition~\ref{def:wario_skull_door}) in state \textsc{Closed} and is carrying a key, Wario can change the state of the door to \textsc{Open}. If the key is used in this way, the key disappears as soon as the state of the skull door changes to \textsc{Open} i.e. the key cannot be reused.
\end{definition}

\begin{definition}
    Let a \textit{One-Way Game Transitions} be a pair of Game Rooms $(s, d)$ such that there exists a Game Transitions in the game that maps $(s, d)$ to $(s^{\prime}, d^{\prime})$ but there exists no Game Transition in the game that maps $(d, s)$ to $(d^{\prime}, s^{\prime})$. In other words, a one-way transition is a Game Transition that the player tile can traverse in one direction only. 
\end{definition}

\noindent To set the stage for our result, we show that One-Way Game Transitions can be constructed in Wario Land.

\begin{lemma}\label{lem:wario_one_way_path}
One-Way Game Transitions can be constructed in \textit{Wario Land}.
\end{lemma}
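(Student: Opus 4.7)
The plan is to exploit the fact that skeleton keys are consumed upon use (Definition~\ref{def:wario_key}) to construct an asymmetric key distribution between two rooms. First I would specify a pair of rooms $s$ and $d$ and place a closed skull door in $s$ whose opening triggers a Game Transition (Definition~\ref{def:game_transition}) carrying Wario from $s$ into $d$. Symmetrically, I would place a closed skull door in $d$ whose opening would trigger the reverse transition. The asymmetry is introduced by placing exactly one key tile in $s$, reachable from Wario's starting position, and placing no key tile anywhere in $d$.

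With this layout the forward direction is immediate: Wario picks up the key in $s$, walks to the skull door, uses the key to flip its state to \textsc{Open}, and the resulting open door (Definition~\ref{def:wario_skull_door}) enables the Game Transition from $s$ to $d$. The reverse direction fails because, upon arrival in $d$, the key has been destroyed by the act of opening the door in $s$, so Wario holds no key. The only door in $d$ that could induce the transition back to $s$ is closed, and without a key Wario cannot change its state to \textsc{Open}; hence the reverse Game Transition is never enabled, yielding the desired one-way behavior.

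The only genuine obstacle is to confirm that this construction respects the unmodified rules of \textit{Wario Land}, i.e., that rooms containing just a door, a key, and Wario are legal configurations, and that no incidental game mechanic (such as an alternative means of obtaining a key in $d$, or of forcing a closed skull door open without a key) exists in the vanilla game. Since we are free to design the tile layout of $s$ and $d$ from the base game's library and the only non-trivial interaction present is the key--door contact specified in Definitions~\ref{def:wario_skull_door} and~\ref{def:wario_key}, no such alternative exists and the construction gives a valid One-Way Game Transition.
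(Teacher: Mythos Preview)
Your construction has a genuine gap. You correctly observe that the key is consumed when opening the skull door in $s$, so Wario cannot open the fresh closed door you placed in $d$. But you have not accounted for the door you \emph{already opened}. When Wario passes through the (now \textsc{Open}) skull door in $s$, he is deposited at an entry point somewhere in $d$; in \textit{Wario Land} that entry point is itself a usable passage back to $s$ once the door is open. Your claim that ``the only door in $d$ that could induce the transition back to $s$ is closed'' is therefore not justified: there is a second route back, namely the entry point of the door Wario just used, and nothing in your layout prevents Wario from simply walking back through it. The extra closed door in $d$ is a red herring---blocking it does not block the actual return path.

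The paper handles exactly this issue, and by a completely different mechanism: it ignores key scarcity altogether and instead exploits geometry. The entry point in $d$ is placed high above the floor, with every reachable tile below it farther away than Wario's maximum jump height, so Wario falls upon arrival and can never climb back to the entry point. That is the crux of the lemma. If you want to salvage your key-based idea, you would still need to combine it with some device (geometric or otherwise) that makes the arrival point in $d$ unreachable after landing; at that point the second closed door becomes superfluous.
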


\begin{proof}
    In the following way, we can construct a One-Way Game Transition from any Game Room $r_{i}$ to any other Game Room $r_{j}$. Place a key and a skull door in $r_{i}$. The skull door transports Wario to an entry point in $r_{j}$. Place the entry point of the skull door in $r_{j}$ such that the accessible tiles below it are all further from the entry point than the maximum jump distance of Wario. We refer the reader to Figure~\ref{fig:wario_one_way_path}. When transported to Game Room $r_{j}$, Wario falls to the tiles below and cannot reach the entry point, making it impossible to return to Game Room $r_{i}$.
\end{proof}

\begin{figure}[]
    \centering
    \includegraphics[width=0.5\linewidth]{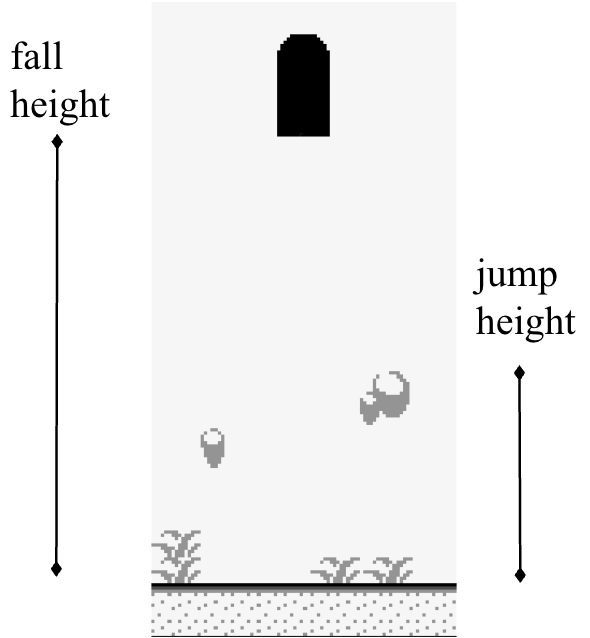}
    \caption{One-way paths can be constructed in Wario Land}
    \label{fig:wario_one_way_path}
\end{figure}

\begin{definition}
    Let a \textit{Skull Door Graph} be a directed planar graph the vertices of which have either 1 incoming edge and 2 outgoing edges, or 2 incoming edges and 1 outgoing edge.
\end{definition}

\noindent We prove the following trivial result, which will be useful in the reduction used in proving the \textbf{NP}-hardness of \textsc{Wario Land}. 

\begin{lemma}\label{lem:wario_vertex_exists}
    Let $G$ be any non-empty Skull Door Graph. $G$ contains at least one vertex with 2 incoming edges and 1 outgoing edge.
\end{lemma}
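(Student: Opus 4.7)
The plan is to use a handshaking-style degree-sum argument; planarity will play no role, only the constraint on in/out-degrees at each vertex. Let $a$ denote the number of vertices of $G$ with one incoming and two outgoing edges, and let $b$ denote the number of vertices with two incoming and one outgoing edge. By the definition of a Skull Door Graph, these are the only two allowed types, so $|V(G)| = a + b$.

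Next I would count the edges of $G$ in two ways. Summing out-degrees over all vertices gives $|E(G)| = 2a + b$, while summing in-degrees gives $|E(G)| = a + 2b$. Since every directed edge contributes exactly one to each total, the two expressions must agree, which forces the identity $a = b$.

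To finish, I would invoke the non-emptiness hypothesis: since $G$ has at least one vertex, $a + b \geq 1$, and combined with $a = b$ this yields $2b \geq 1$, so $b \geq 1$ because $b$ is a non-negative integer. In particular there exists at least one vertex of the second type, i.e. a vertex with two incoming edges and one outgoing edge, as desired.

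There is no real obstacle here; the lemma is essentially a one-line counting exercise. The only thing worth flagging is that the argument does not depend on planarity or on $G$ being connected, so the result is actually slightly stronger than stated: every non-empty graph satisfying the degree constraints has exactly as many $(2\text{ in}, 1\text{ out})$ vertices as $(1\text{ in}, 2\text{ out})$ vertices, and both counts are at least $1$.
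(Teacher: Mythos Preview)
Your argument is correct and matches the paper's proof essentially line for line: both count in- and out-degrees to obtain $a=b$ and then use non-emptiness to conclude $b\geq 1$. The only cosmetic difference is that the paper phrases the final step as an appeal to the pigeonhole principle rather than to the integrality of $b$.
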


\begin{proof}
    Let $\alpha$ be the number of vertices with $1$ incoming and $2$ outgoing edges, and $\beta$ be the number of vertices with 2 incoming and 1 outgoing edge. The total incoming edges then are $\alpha + 2\beta$ and the total outgoing edges are $2\alpha + \beta$. By the handshaking lemma for directed graphs, we have $\alpha + 2\beta = 2\alpha + \beta$ which simplifies to $\alpha = \beta$. Since $G$ is non-empty, the result follows from the pigeonhole principle.
\end{proof}

\noindent We now define the \textsc{Wario Land} decision problem.

\begin{definition}\label{def:wario_land_decision_problem}
    An instance of the \textsc{Wario Land} decision problem is a Game Level in which each Game Room may contain a treasure that can be collected by Wario. One of the Game Rooms is marked as the starting Game Room. For a given Game Level, if Wario, when placed in the starting Game Room, can collect all treasures in the Game Level, while obeying the rules of the game, then \textsc{Wario Land} outputs $1$. It outputs $0$ otherwise.
\end{definition}

\begin{theorem}\label{thm:wario_land}
    \textsc{Wario Land} is \textbf{NP}-hard.
\end{theorem}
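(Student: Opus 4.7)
The plan is to Karp-reduce from \textsc{Hamiltonian Cycle} to \textsc{Wario Land}, using the collect-all-treasures objective to encode the visit-every-vertex requirement. Given an instance $G=(V,E)$ of \textsc{Hamiltonian Cycle} (invoking the directed, planar, bounded-degree variant, which remains \textbf{NP}-hard and meshes with the out-degree constraints of a Skull Door Graph), I would build a Game Level $L$ whose Game Rooms are in bijection with $V$, with every room $R_v$ containing exactly one treasure and one arbitrary vertex's room designated as the start.

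To realize the edges of $G$ inside $L$, for every directed edge $(u,v)\in E$ I would install a one-way transition from $R_u$ to $R_v$ using the gadget of Lemma~\ref{lem:wario_one_way_path}. Each such transition is implemented by a single skull door with a single matching skeleton key placed inside $R_u$, so that each edge can be traversed at most once and only in the intended direction. The resulting graph whose vertices are rooms and whose arcs are the skull-door transitions is, by construction, a Skull Door Graph, and Lemma~\ref{lem:wario_vertex_exists} guarantees that at least one vertex of 2-in/1-out type exists to serve as the natural "closing" point of a Hamiltonian traversal.

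Correctness would then be argued in both directions. If $G$ admits a Hamiltonian cycle $v_{i_1},v_{i_2},\ldots,v_{i_n},v_{i_1}$, Wario can follow the corresponding sequence of skull-door transitions, picking up each room's treasure and consuming each key exactly once, so the decision problem outputs $1$. Conversely, if Wario collects every treasure, each room is visited at least once; since skull-door transitions are one-way and keys are single-use, the sequence of transitions he actually takes projects to a simple walk in $G$ that touches every vertex, i.e.\ a Hamiltonian path, and the final return to $R_s$ closes the cycle.

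The main obstacle I anticipate is the bookkeeping on keys and doors: one must calibrate the placement of keys so that Wario cannot squander a key needed for a later traversal step, and must argue that the planar 2D layout of the Game Boy map can realize an arbitrary planar 3-bounded digraph's adjacency pattern (here I would lean on the standard planar-embedding machinery subsumed by Viglietta's meta-theorem cited in Section~\ref{sec:related_work}). The reduction is manifestly polynomial-time, since each vertex and edge of $G$ maps to a constant-size gadget inside $L$.
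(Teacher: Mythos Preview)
Your reduction has the right shape, but the converse direction does not go through, and the gap is exactly the key-bookkeeping issue you flagged. You place one key in $R_u$ for every outgoing arc of $u$, i.e.\ one key per \emph{edge}. Under the paper's skull-door semantics (Definition~\ref{def:wario_key}) a door, once opened, remains \textsc{Open} forever, so your assertion that ``each edge can be traversed at most once'' is already false; and even if doors did close behind Wario, an edge-simple walk covering all vertices is merely a trail, not a vertex-simple path, so the leap to ``projects to a simple walk \ldots\ i.e.\ a Hamiltonian path'' is a non-sequitur. In fact your construction is too permissive to encode Hamiltonicity at all: if $G$ is strongly connected, Wario can always collect every treasure, because whenever some room reachable via currently open doors still holds an unspent key he can go there and open one more door, and once no such room remains one checks that the open-door reachable set must already be all of $V$. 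Hence any strongly connected Skull Door Graph without a Hamiltonian cycle is a counterexample to your reduction.

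The device you are missing, and which the paper supplies, is to place exactly one key per \emph{room}. Then the first time Wario enters a room he acquires its sole key and must spend it to exit; on any revisit the room is keyless, so the only available exits are doors he has already opened, which lead back into the already-visited set. This is what forces Wario's route to be vertex-simple. The paper additionally adjoins a fresh sink vertex $u$ carrying a treasure, reachable from the distinguished $2$-in/$1$-out start vertex $v$ via a door that is \textsc{Open} from the outset: since $u$ has no outgoing transitions it must be visited last, which forces Wario to re-enter $v$ (now keyless) immediately beforehand and hence closes the simple walk through $G$ into a Hamiltonian cycle at $v$. Your ``final return to $R_s$ closes the cycle'' step has no analogous mechanism compelling it, since the decision problem (Definition~\ref{def:wario_land_decision_problem}) only requires that all treasures be collected.
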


\begin{proofsketch}
We derive our theorem from a general strategy introduced by Viglietta~\cite{viglietta} (Metatheorem 3a in their work) that can be used to construct a Karp reduction from \textsc{Hamiltonian Cycle} to games with doors, keys, and one-way paths. 
\end{proofsketch}

\begin{proof}
    We construct a Karp reduction from \textsc{Hamiltonian Cycle} to \textsc{Wario Land}. Let $G$ be a (non-empty) Skull Door Graph with $n$ vertices. We note that \textsc{Hamiltonian Cycle} is \textbf{NP}-complete even for Skull Door Graphs~\cite{hamiltonian_cycle_planar_digraphs, viglietta}. We pick a vertex $v$ with $2$ incoming edges and $1$ outgoing edge. Lemma~\ref{lem:wario_vertex_exists} guarantees that at least one such $v$ exists. We attach a new outgoing edge (i.e. a second outgoing edge) to $v$ that connects it to a new vertex $u$. We then transform $G$ in the following way:
    \begin{itemize}
        \item We replace each vertex $i$ in $G$ with a Game Room $r_{i}$ containing a treasure.
        \item We replace each directed edge $(i, j)$ in $G$ except $(v, u)$ (picked above) with a One-Way Game Transition from Game Room $r_{i}$ to Game Room $r_{j}$ constructed using the method of Lemma~\ref{lem:wario_one_way_path}. 
        \item We replace directed edge $(v, u)$ with a One-Way Game Transition that does \textit{not} require a key i.e the skull door leading to the Game Room corresponding to $u$, that is placed in the Game Room corresponding to $v$, is already in state \textsc{Open}.
    \end{itemize}

    We initially put Wario in the Game Room corresponding to the vertex $v$ we originally picked. To win, Wario must collect the treasure in every Game Room. After our transformation, each Game Room except for the Game Room corresponding to $u$ contains exactly one key. Wario must use up a key picked up in every Game Room to access a door in state \textsc{Close} which leads to an unvisited Game Room. If Wario revisits any Game Room other than the Game Room corresponding to $v$, Wario can only enter doors in state \textsc{Open}. It follows that Wario can only win if Wario can visit every Game Room other than the one corresponding to $u$ and the first Game Room Wario revisits is the Game Room corresponding to $v$. From there, Wario can move to the Game Room that corresponds to $u$ to collect the final treasure. The Game Room corresponding to $u$ must necessarily be the final Game Room Wario visits as $u$ has no outgoing edges. Wario can win if and only if $G$ contains a Hamiltonian cycle. Therefore, since \textsc{Hamiltonian Cycle} is \textbf{NP}-complete, we have shown that \textsc{Wario Land} is \textbf{NP}-complete.
\end{proof}

\begin{figure}[]
    \centering
    \includegraphics[width=0.5\linewidth]{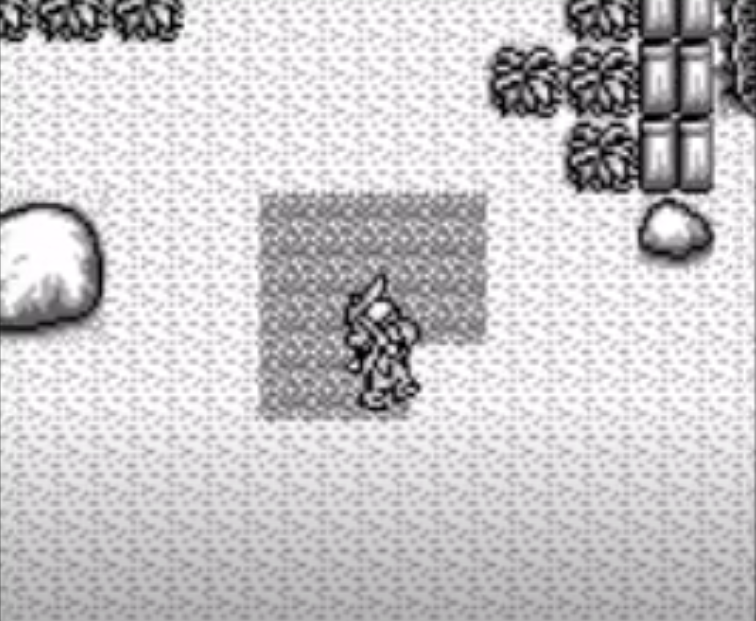}
    \caption{Crop planting and harvesting mechanics of Harvest Moon GB}
    \label{fig:harvest_moon_gb}
\end{figure}

\subsection{Harvest Moon GB}\label{sec:harvest_moon_gb}

Harvest Moon GB is a role-playing and farm simulation game published for the Game Boy in 1997. The game has a limited simulation time, measured in in-game days. The player is allocated a limited number of tiles in the game that can be used to grow crops.  Each crop takes a certain number of days to grow~\cite{HarvestMoonGBCrops}. We refer the reader to Figure~\ref{fig:harvest_moon_gb}. Each crop sells for a different price. Some crops are single-use i.e. they need to be re-planted from seeds after a harvest, while others can regrow without being re-planted. We now define the \textsc{Harvest Moon GB} decision problem.

\begin{definition}
    An instance of the \textsc{Harvest Moon GB} decision problem has $k$ tiles and a simulation time of $W$ days. It also has $n$ single-use crops each of which takes $w_{i}$ days to grow and sells for $v_{i}$. If the player can attain a revenue of at least $V$ in the provided $W$ days, \textsc{Harvest Moon GB} outputs $1$. Otherwise, it outputs $0$.
\end{definition}

\begin{theorem}\label{thm:harvest_moon_np_hard}
    \textsc{Harvest Moon GB} is \textbf{NP}-hard.
\end{theorem}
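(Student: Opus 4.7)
The plan is to exhibit a Karp reduction from unbounded \textsc{Knapsack} to \textsc{Harvest Moon GB}, making rigorous the informal bijection hinted at by Braeger~\cite{Braeger2012SimplifiedHM} but for the full (non-simplified) game. Given a \textsc{Knapsack} instance with items of weights $w_1, \ldots, w_n$, values $v_1, \ldots, v_n$, capacity $W$, and target value $V$, I would construct the \textsc{Harvest Moon GB} instance with $k = 1$ tile, simulation time of $W$ days, $n$ single-use crops in which crop $i$ grows in $w_i$ days and sells for $v_i$, and target revenue $V$. This construction is clearly computable in polynomial time, since each field of the target instance is a verbatim copy of a field of the source instance.

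The body of the argument is a biconditional. For the forward direction, from a \textsc{Knapsack} solution $(x_1, \ldots, x_n)$ I would construct a valid playthrough by planting $x_1$ consecutive copies of crop $1$ on the lone tile (re-planting from seeds after each harvest), then $x_2$ copies of crop $2$, and so on. This schedule consumes $\sum_{i} w_i x_i \le W$ days and yields revenue $\sum_{i} v_i x_i \ge V$. For the reverse direction, any winning playthrough on a single tile is completely described by the multiset of crops planted; letting $x_i$ be the number of times crop $i$ is planted, the scheduling constraint forces $\sum_{i} w_i x_i \le W$ and the revenue constraint forces $\sum_{i} v_i x_i \ge V$, so $(x_1, \ldots, x_n)$ solves the \textsc{Knapsack} instance. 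Invoking the \textbf{NP}-hardness of (unbounded) \textsc{Knapsack}, one of Karp's original 21 \textbf{NP}-complete problems~\cite{Karp72}, then concludes the proof.

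The main obstacle I anticipate is not combinatorial but interpretive: \textit{Harvest Moon GB} has many auxiliary mechanics (seasons, weather, tool upgrades, livestock, townsfolk interactions) beyond crop growth, and I must argue that restricting to $k = 1$ and a short enough $W$ lets the reduction side-step these complications without violating the game's rules. I would address this by appealing to the generalization convention of Aloupis et al.~\cite{aloupis_nintendo} and Viglietta~\cite{viglietta} used throughout this work, under which unmodified mechanics that are not exercised by the chosen instance can be treated as inert, provided the instance is actually realizable. In particular, one may fix the starting in-game season to one in which all of the chosen crop types are plantable, and note that because the simulation terminates in at most $W$ days we may model a single-season playthrough without loss of generality.
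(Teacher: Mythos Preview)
Your proposal is correct and follows essentially the same approach as the paper: a Karp reduction from (unbounded) \textsc{Knapsack} using a single tile, $W$ days of simulation, and one single-use crop per item with growth time $w_i$ and sale price $v_i$. Your version is in fact more complete than the paper's, since you spell out both directions of the biconditional and explicitly address the auxiliary game mechanics, whereas the paper simply asserts the ``if and only if'' without elaboration.
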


\begin{proof}
    We construct a Karp reduction from \textsc{Knapsack} to \textsc{Harvest Moon GB}. Consider an instance of \textsc{Knapsack} with maximum weight $W$ and inputs $\eta_{i}$ such that each $\eta_{i}$ has value $v_{i}$ and weight $w_{i}$. We construct a \textsc{Harvest Moon GB} instance where the player has exactly $1$ tile that can be used to plant crops and a maximum simulation time of $W$ days. For each $\eta_{i}$, we allow the player to grow a single-use crop that takes $w_{i}$ days to grow and sells for $v_{i}$. Thus, \textsc{Harvest Moon GB} outputs $1$ if and only if \textsc{Knapsack} outputs $1$. Therefore, since \textsc{Knapsack} is \textbf{NP}-complete, we have shown that \textsc{Harvest Moon GB} is \textbf{NP}-hard.
\end{proof}

\noindent The above result also holds for the original Harvest Moon game developed for the Super Nintendo Entertainment System (SNES) as it has identical farming simulation mechanics.

\subsection{Mole Mania}\label{sec:mole_mania}

Mole Mania is a puzzle game published by Nintendo for the Game Boy in $1996$. The player contains Muddy, who needs to navigate through levels. Each level has an above-ground component and an underground component. There are two types of floor tiles: $\{ \textsc{Soft}, \textsc{Hard} \}$. If Muddy is above a \textsc{Soft} floor tile, Muddy may dig it to reach the underground component of the level. Muddy may dig back up from the underground component to the above-ground component at another location. If Muddy is above a \textsc{Hard} floor tile, Muddy may \textbf{not} dig it, blocking the underground component. The game contains special tiles that Muddy can interact with. Balls, Barrels, and Cabbages are special tiles that Muddy can push and pull. Weights are special tiles that Muddy can only push, but not pull~\cite{molemaniamanual}. We refer the reader to Figure~\ref{fig:mole_mania}.

\begin{figure}[]
    \centering
    \includegraphics[width=0.45\linewidth]{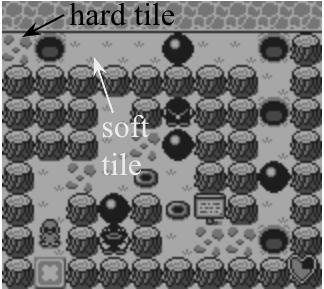}\hfill
    \includegraphics[width=0.5\linewidth]{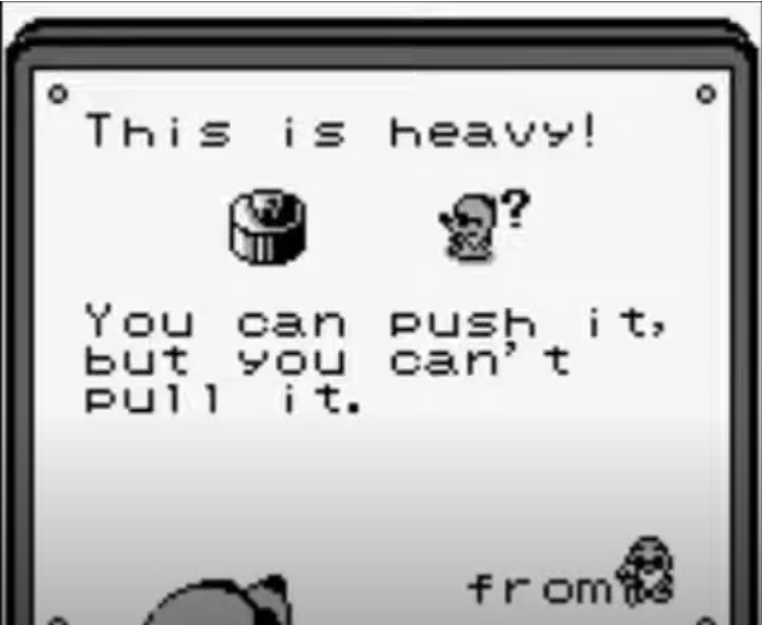}
    \caption{The game mechanics of Mole Mania. \textbf{Left}: the first-floor tile from the top-left (after the water) is a \textsc{Hard} tile. The third-floor tile from the top-left is a \textsc{Soft} tile. \textbf{Right}: Weight tiles are special tiles in Mole Mania that can only be pushed, but not pulled.}
    \label{fig:mole_mania}
\end{figure}

For our analysis of Mole Mania, we need a working understanding of Push-1, another puzzle game. We only consider the 2-dimensional version of Push-1. Push-1 consists of a planar square grid. Each cell of the grid either has a block tile or is empty. A robot is placed in one cell of the grid, and can move to any adjacent empty cell. The robot can push (but not pull) a block tile, and move it to the adjacent cell in the pushed direction if the adjacent cell is empty. One cell is marked as the winning cell. To win Push-1, the player must navigate the robot from its starting cell to the winning cell~\cite{push1website}. The \textsc{Mole Mania} and \textsc{Push-1} decision problems can be defined in the same way as the \textsc{Donkey Kong} decision problem (Definition~\ref{def:donkey_kong_decision_problem}).

\begin{lemma}\label{lem:push_1}
    \textsc{Push-1} is \textbf{NP}-hard via a Karp reduction from \textsc{Sat} to \textsc{Push-1}.
\end{lemma}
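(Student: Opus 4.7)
The plan is to give a gadget-based Karp reduction from \textsc{Sat}. Given a formula $\varphi$ with variables $x_{1},\ldots,x_{n}$ and clauses $C_{1},\ldots,C_{m}$, I would construct a single Push-1 grid whose winning cell is reachable from the start cell if and only if $\varphi$ is satisfiable. The grid is organized into two zones chained in series: a \emph{variable zone} through which the robot must commit to a truth assignment, and a \emph{clause zone} through which the robot must then thread before reaching the winning cell.

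First I would design a variable gadget. Because the robot in Push-1 can push but never pull, any block the robot displaces is stuck forever on the far side of the push. I would exploit this asymmetry by placing a ``choice block'' at a mandatory junction inside each variable gadget: to leave the gadget the robot must push this block either north or south, and walls on either side prevent the block from ever being moved back to the junction. A northward push encodes $x_{i}=\text{true}$, a southward push $x_{i}=\text{false}$. The $n$ variable gadgets are chained sequentially so that every assignment is fixed before the robot enters any clause gadget.

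Second I would wire the variable gadgets to the clause gadgets via \emph{literal corridors}. For each literal occurrence I would route a corridor from the corresponding variable gadget to the corresponding clause gadget, arranging the local geometry of the choice block so that pushing it in the ``true'' direction clears the positive-literal corridors and obstructs the negative-literal corridors (and vice versa). A clause gadget is then a junction with one entrance per literal, traversable iff at least one of its corridors is clear. Correctness follows because the commitments made in the variable zone act as a fixed truth assignment under which each clause gadget behaves as a logical OR, so a winning robot trajectory corresponds exactly to a satisfying assignment of $\varphi$. The construction is plainly polynomial in $|\varphi|$.

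The hardest step will be realizing this construction in the plane: \textsc{Sat} instances are not planar in general, so literal corridors will have to cross. I would therefore need a Push-1 \emph{crossover gadget} that lets two literal corridors pass through each other without allowing a block on one corridor to leak onto the other, and without letting the robot shortcut between corridors, regardless of the order in which the two corridors are activated by the robot. Designing such a gadget and verifying that it preserves the push-only semantics under every activation order is the delicate part of the argument; once it is in hand, the planar layout of the variable zone, the clause zone, and the connecting corridors is routine, completing the reduction.
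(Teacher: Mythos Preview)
The paper does not actually prove this lemma: its entire proof is a citation to Demaine, Demaine, and O'Rourke's \textsc{Push-1} paper, pointing to Theorem~1 there. So you are attempting far more than the paper does, and your outline is in fact the right skeleton---it mirrors the structure of the cited Demaine et al.\ reduction (variable gadgets, clause gadgets, literal wires, and a crossover).

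That said, as a standalone proof your proposal has a real gap. You correctly identify the crossover gadget as ``the delicate part of the argument,'' but then you do not build it; you simply assert that once it is in hand the rest is routine. For \textsc{Push-1} specifically, the crossover is genuinely nontrivial---the robot can push blocks only one cell at a time and cannot pull, so arranging two independent traversals through a common region without leakage requires a carefully engineered configuration and a case analysis over activation orders. Without that gadget actually exhibited and verified, the reduction is not complete. A secondary issue is your variable gadget: you say a single push of the choice block ``clears the positive-literal corridors and obstructs the negative-literal corridors,'' but in \textsc{Push-1} one push moves one block by one cell, so a single choice block cannot simultaneously gate an unbounded number of literal corridors; the actual construction needs per-literal locking geometry along the chosen branch rather than a single global switch.

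In short: for the purposes of this paper a citation suffices, and if you want to spell out the argument you should either exhibit the concrete \textsc{Push-1} gadgets (variable, clause, lock, and especially crossover) or explicitly defer to the gadget constructions in the cited reference rather than leaving them as promissory notes.
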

\begin{proof}
    This result was proved by Erik Demaine et al.~\cite{push1}. We refer the reader to Theorem 1 in their paper.
\end{proof}

\begin{theorem}\label{thm:mole_mania}
    \textsc{Mole Mania} is \textbf{NP}-hard.
\end{theorem}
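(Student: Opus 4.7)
The plan is to construct a Karp reduction from \textsc{Push-1} to \textsc{Mole Mania}, which, composed with the reduction from \textsc{Sat} to \textsc{Push-1} of Lemma~\ref{lem:push_1}, yields the desired \textbf{NP}-hardness. The key observation is that Mole Mania's Weight tiles have exactly the same movement semantics as Push-1's block tiles: they can be pushed into an adjacent empty cell but never pulled. So our construction should faithfully embed a Push-1 grid into the above-ground component of a Mole Mania level while suppressing every other Mole Mania mechanic that would give Muddy extra power.

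Given a \textsc{Push-1} instance consisting of a grid, a starting cell for the robot, a winning cell, and a set of cells containing block tiles, I would build a Mole Mania Game Room whose cells are in one-to-one correspondence with the grid cells of the Push-1 instance. Each block tile of Push-1 is replaced by a Weight tile in the corresponding cell, the robot is replaced by Muddy, the starting cell becomes the \textsc{Start} cell, and the winning cell becomes the \textsc{Win} cell. Every floor tile in the above-ground component is declared \textsc{Hard}, and the underground component is left empty (or simply omitted by surrounding any underground access with \textsc{Hard} tiles). The boundary of the Push-1 grid is simulated with impassable wall tiles so that neither Muddy nor a Weight can leave the simulated region. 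This construction is clearly polynomial in the size of the Push-1 instance.

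The main obstacle is verifying that the reduction preserves yes/no answers in both directions, which amounts to showing that Muddy has no ``extra'' moves beyond those available to the Push-1 robot. In the forward direction a solving sequence of robot moves translates step-for-step to Muddy walking and pushing Weights, so any solvable Push-1 instance yields a solvable Mole Mania instance. The reverse direction is the delicate part: I must argue that Muddy cannot circumvent a Weight by digging (ruled out because every floor tile is \textsc{Hard}, so the dig action is disabled everywhere), cannot pull a Weight (ruled out by the definition of Weight tiles), and cannot exploit Balls, Barrels, or Cabbages (ruled out because we simply do not place any such special tiles in the level). With all non-Push-1 mechanics disabled, any winning play in the Mole Mania instance projects to a legal winning play in the underlying Push-1 instance, completing the equivalence. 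Hence \textsc{Mole Mania} is \textbf{NP}-hard.
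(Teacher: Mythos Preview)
Your proposal is correct and follows essentially the same approach as the paper: a Karp reduction from \textsc{Push-1} to \textsc{Mole Mania} by tiling every cell with \textsc{Hard} floor, replacing blocks by Weights, identifying the robot with Muddy and the start/win cells accordingly, then invoking Lemma~\ref{lem:push_1} and transitivity of Karp reductions. Your write-up merely adds a bit more explicit justification of the two directions of the equivalence, which the paper leaves implicit.
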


\begin{proof}
    We construct a Karp reduction from \textsc{Sat} to \textsc{Mole Mania}. We first construct a Karp reduction from \textsc{Push-1} to \textsc{Mole Mania}. Consider an instance of \textsc{Push-1} with an $n \times n$ grid. We create a \textsc{Mole Mania} Game Room with $n \times n$ cells. We give each cell a \textsc{Hard} floor tile ensuring the game is restricted to the above-ground component. For each cell in the \textsc{Push-1} grid that contains a block, we place a corresponding cell in the \textsc{Mole Mania} Game Room at the same position with a Weight tile. We place Muddy Mole on the tile corresponding to the robot placement tile in \textsc{Push-1}. Similarly, we place the \textsc{Win} cell corresponding to the \textsc{Win} cell in \textsc{Push-1}. Now that we have a Karp reduction from \textsc{Push-1} to \textsc{Mole Mania}, we use Lemma~\ref{lem:push_1} and the well-known result that Karp reductions are transitive, to prove the existence of a Karp reduction from \textsc{Sat} to \textsc{Mole Mania}. 
\end{proof}

\section{Discussion}\label{sec:discussion}

We have covered non-trivial results for \textbf{NP}-hardness in Section~\ref{sec:gameboy_games}. Section~\ref{sec:known_results} discusses Game Boy games which are already known to be \textbf{NP}-hard. Section~\ref{sec:easily_derived_results} discusses Game Boy games that are not already known to be \textbf{NP}-hard, but for whom proving \textbf{NP}-hardness from already known results is relatively easy.

\subsection{Known \textbf{NP}-hard Games}\label{sec:known_results}
\textit{Tetris} was released for the Game Boy in $1989$. The gameplay is more or less identical to other implementations of \textit{Tetris} which are \textbf{NP}-hard~\cite{demaine_tetris, ThinTetris_JIP, TetrisPieces_FUN2024}. \textit{Mario's Picross} released for the Game Boy in $1995$ is a Nonogram puzzle. Ueda and Nagao~\cite{nonogram_npcomplete} prove that Nonogram puzzles are \textbf{NP}-complete. \textit{Pac-Man} was released for the Game Boy in $1990$. Viglietta~\cite{viglietta} shows that \textsc{Pac-Man} is \textbf{NP}-hard via a Karp reduction from \textsc{Hamiltonian Cycle}. Aloupis et al.~\cite{aloupis_nintendo} prove the \textbf{NP}-hardness of \textsc{Pokemon Red}, \textsc{Pokemon  Blue}, and \textsc{Pokemon Yellow} which were released for the Game Boy in the late $1990$s.

\subsection{Easily Derived Results}\label{sec:easily_derived_results}

\textit{Lock `n' Chase} released for the Game Boy in $1990$ has game mechanics similar to \textit{Pac-Ma}n.  We can construct a Karp reduction from an instance of \textsc{Pac-Man} to an instance of \textsc{Lock `n' Chase} and use the transitivity of Karp reductions to show that \textsc{Lock `n' Chase} is \textbf{NP}-hard. \textit{Lock 'n' Chase} also has key and door game mechanics which provides a second, even simpler, way to show that \textsc{Lock 'n' Chase} is \textbf{NP}-hard. We can construct a Karp reduction directly from \textsc{Hamiltonian Cycle} to \textsc{Lock `n' Chase} just like we did for Wario Land in Section~\ref{sec:wario_land}. 

\textit{The Lion King} was released for the Game Boy between $1994$ and $1995$. \textit{The Lion Game} features a bonus level featuring the character Timon where the player needs to collect ``bugs'' within a given time frame. Fori\v{s}ek~\cite{forisek} shows a metatheorem (Metatheorem 2 in their work) asserting that \textit{A 2D platform game where the collecting items feature is present and a time limit is present as a part of the instance is \textbf{NP}-hard}. The metatheorem relies on a Karp reduction from \textsc{Hamiltonian Cycle} on grid graphs. Fori\v{s}ek's metatheorem can easily be applied on the bonus level featuring Timon to demonstrate the \textbf{NP}-hardness of \textsc{The Lion King}.

\section{Open Problems}\label{sec:open_problems}

This work suggests the following two open problems.\\

\noindent
\textbf{Problem 1: Analyzing PSPACE-hardness.} We have only provided results involving the \textbf{NP}-hardness of the six Game Boy games we discussed. The question of whether or not any of these Game Boy games are also \textbf{PSPACE}-hard remains unsolved. Prior work~\cite{aloupis_nintendo, viglietta, forisek} has suggested frameworks for analyzing the \textbf{PSPACE}-hardness of video games which may prove helpful.\\

\noindent
\textbf{Problem 2: Complexity of Dr. Mario.} \textit{Dr. Mario} was released for the Game Boy in $1990$. While the game mechanics of \textit{Dr. Mario} are similar to those of \textit{Tetris}, it only has one kind of ``piece'' that falls: a two-colored medical capsule. We leave the \textbf{NP}-hardness of \textit{Dr. Mario} as an open problem. We note that the \textbf{NP}-hardness of \textit{Tetris} with the Super Rotation System (SRS) restricted to a single piece is a big open problem suggested by prior work~\cite{TetrisPieces_FUN2024}. We conjecture that these two problems are related.

\section{Conclusion}\label{sec:conclusion}

We have provided new results demonstrating that generalized versions of \textit{Donkey Kong}, \textit{Wario Land}, \textit{Mole Mania}, \textit{Lock `n' Chase}, and \textit{The Lion King}, are all \textbf{NP}-hard. We have also provided the first detailed proof of the \textbf{NP}-hardness of \textit{Harvest Moon GB}. Our constructions are consistent with the game mechanics of each game as long as the memory and screen size are unbounded.

%%=============================================%%
%% For submissions to Nature Portfolio Journals %%
%% please use the heading ``Extended Data''.   %%
%%=============================================%%

%%=============================================================%%
%% Sample for another appendix section			       %%
%%=============================================================%%

%% \section{Example of another appendix section}\label{secA2}%
%% Appendices may be used for helpful, supporting or essential material that would otherwise 
%% clutter, break up or be distracting to the text. Appendices can consist of sections, figures, 
%% tables and equations etc.

%%===========================================================================================%%
%% If you are submitting to one of the Nature Portfolio journals, using the eJP submission   %%
%% system, please include the references within the manuscript file itself. You may do this  %%
%% by copying the reference list from your .bbl file, paste it into the main manuscript .tex %%
%% file, and delete the associated \verb+\bibliography+ commands.                            %%
%%===========================================================================================%%

\bibliographystyle{unsrt}
\bibliography{sn-bibliography}% common bib file
%% if required, the content of .bbl file can be included here once bbl is generated
%%\input sn-article.bbl

% \section*{Appendices}

% \appendix

% \section{Additional Proofs}\label{sec:additional proofs}

\section{Acknowledgements}\label{sec:acknowledgements}

The first author would like to gratefully acknowledge Tien Phuoc Tran for his valuable feedback, editorial assistance, annotations, and intellectual contributions to this manuscript, and Ali Tirmazi, a talented young researcher currently in high school, whose innovative input was crucial to our understanding of the game mechanics we discussed as well as the plausibility of our constructions. The authors thank Erik Demaine for making his MIT course ``Algorithmic Lower Bounds: Fun with Hardness Proofs'' freely available online. The authors also thank Mohammad Hajiaghayi (and Erik Demaine again) who made their textbook~\cite{mit_hardness_book} freely available for us to learn from. The authors are grateful to Nintendo and the game developers of the Game Boy games we discussed. Finally, we thank various people in the video gaming community who put walkthroughs and long plays of the video games we analyzed on YouTube.

\end{document}